\newcommand*{\rom}[1]{\expandafter\@slowromancap\romannumeral #1@}
\begin{document}

\title{On the Finite Optimal Convergence of Logic-Based Benders' Decomposition in Solving 0-1 Min-max Regret Optimization Problems with Interval Costs}

\author{Lucas Assun\c{c}\~ao\inst{1}$^,$\thanks{Partially supported by the Coordination for the Improvement of Higher Education
Personnel, Brazil (CAPES).}$^,$\thanks{The author thanks Vitor A. A. Souza and Phillippe Samer for the valuable discussions throughout the
conception of this work.}
\and Andr\'ea Cynthia Santos \inst{2} \and Thiago F. Noronha \inst{3} \and Rafael Andrade \inst{4}}
\institute{Departamento de Engenharia de Produção, Universidade Federal de Minas Gerais,
Avenida Antônio Carlos, 6627, CEP 31.270-901, Belo Horizonte, MG, 
Brazil \email{lucas-assuncao@ufmg.br}
\and
ICD-LOSI, UMR CNRS 6281, Universit\'e de Technologie de Troyes,\\
12, rue Marie Curie, CS 42060, 10004, Troyes CEDEX, France
\email{andrea.duhamel@utt.fr}
\and 
Departamento de Ciência da Computação, Universidade Federal de Minas Gerais,
Avenida Antônio Carlos, 6627, CEP 31.270-901, Belo Horizonte, MG, 
Brazil \email{tfn@dcc.ufmg.br}
\and Departamento de Estatística e Matemática Aplicada, Universidade Federal do Cear\'a,
Campus do Pici, BL 910, CEP 60.455-900, Fortaleza, CE, Brazil \email{rca@lia.ufc.br}}

\maketitle

\begin{abstract}
This paper addresses a class of problems under interval data uncertainty composed of
\emph{min-max regret} versions of classical 0-1 optimization problems with interval costs. We refer to them as \emph{interval 0-1 min-max regret} problems.
The state-of-the-art exact algorithms for this class of problems work by solving a corresponding mixed integer linear programming formulation
in a Benders' decomposition fashion. Each of the possibly exponentially many Benders' cuts is separated on the fly through the resolution of an instance of
the classical 0-1 optimization problem counterpart. Since these separation subproblems may be NP-hard, not all of them can be modeled by means of linear
programming, unless P = NP. In these cases, the convergence of the aforementioned algorithms are not guaranteed in a straightforward manner.
In fact, to the best of our knowledge, their finite convergence has not been explicitly proved for any interval 0-1 min-max regret problem.
In this work, we formally describe these algorithms through the definition of a logic-based Benders' decomposition framework and prove their convergence
to an optimal solution in a finite number of iterations. As this framework is applicable to any interval 0-1 min-max regret problem, its finite optimal
convergence also holds in the cases where the separation subproblems are NP-hard.
\end{abstract}

\section{Introduction}
Robust Optimization (RO) \cite{Kouvelis1997} has drawn particular attention as an alternative to stochastic programming \cite{Spall03} in modeling uncertainty.
In RO, instead of considering a probabilistic description known \emph{a priori}, the variability of the data is represented by deterministic values in the context of \emph{scenarios}.
A \emph{scenario} corresponds to a parameters assignment, \emph{i.e.}, a value is fixed for each parameter subject to uncertainty.
Two main approaches are adopted to model RO problems: the \emph{discrete scenarios model} and the \emph{interval data model}.
In the former, a discrete set of possible scenarios is considered. In the latter, the uncertainty referred to a parameter is represented by a
continuous interval of possible values. Differently from the discrete scenarios model, the infinite many possible scenarios that arise
in the interval data model are not explicitly given.
Nevertheless, in both models, a classical (\emph{i.e.}, parameters known in advance) optimization problem takes place
whenever a scenario is established.

The most commonly adopted RO criteria are
the \emph{absolute robustness} criterion, the \textit{min-max regret} (also known as \emph{robust deviation} criterion) and
the \textit{min-max relative regret} (also known as \emph{relative robustness} criterion).
The absolute robustness criterion is based on the anticipation of the worst possible conditions.
Solutions for RO problems under such criterion tend to be conservative, as they optimize only a worst-case scenario.
On the other hand, the min-max regret and the min-max relative regret are less conservative criteria and, for this reason, they have been addressed in
several works (\emph{e.g.}, \cite{Averbakh05,amadeu14,Mo06,MontemanniBarta07,Pereira11}).
Intuitively speaking, the \emph{regret (robust deviation)} of a solution in a given scenario is the cost difference between such solution and
an optimal one for this scenario. 
In turn, the \emph{relative regret} of a solution in a given scenario consists of the corresponding regret normalized by the cost of an optimal solution
for the scenario considered.
The \emph{(relative) robustness} cost of a solution is defined as its maximum \emph{(relative) regret} over all scenarios.
In this sense, the min-max (relative) regret criterion aims at finding a solution that has the minimum
(relative) robustness cost. Such solution is referred to as a \emph{robust solution}.

RO versions of several combinatorial optimization problems have been studied in the literature, addressing, for example, uncertainties over costs.
Handling uncertain costs brings an extra level of difficulty, such that even polynomially solvable problems become NP-hard in their corresponding
robust versions \cite{Kasperski08,MonteGa04c,Pereira11,YaKaPi01}.
In this study, we are interested in a particular class of RO problems, namely \emph{interval 0-1 min-max regret} problems,
which consist of min-max regret versions of Binary Integer Linear Programming (BILP) problems with interval costs.
Notice that a large variety of classical optimization
problems can be modeled as BILP problems, including (i) polynomially solvable problems, such as the shortest path problem, the minimum spanning tree
problem and the assignment problem, and (ii) NP-hard combinatorial problems, such as the 0-1 knapsack problem, the set
covering problem, the traveling salesman problem and the restricted shortest path problem \cite{Garey79}.
An especially challenging subclass of interval 0-1 min-max regret problems, referred to as \emph{interval 0-1 robust-hard problems}, arises when
we address interval 0-1 min-max regret versions of classical NP-hard combinatorial problems as those aforementioned in (ii). 

Aissi at al.~\cite{Aissi09} showed that, for any interval 0-1 min-max regret problem (including interval 0-1 robust-hard problems),
the robustness cost of a solution can be computed by solving a single instance of the classical optimization problem counterpart (\emph{i.e.}, costs known in
advance) in a particular scenario. Therefore, one does not have to consider all the infinite many possible scenarios during the search for a robust solution,
but only a subset of them, one for each feasible solution. Nevertheless, since the number of these promising scenarios can still be huge,
the state-of-the-art exact algorithms for interval 0-1 min-max regret problems work by implicitly separating them on the fly, in a
Benders' decomposition \cite{Benders62} fashion (see, \emph{e.g.}, \cite{Mo06,MontemanniBarta07,Averbakh13}). Precisely, each Benders' cut is generated through the resolution of an instance of the classical optimization problem counterpart.
Notice that, for interval 0-1 robust-hard problems, these separation subproblems are NP-hard and, thus, they cannot be modeled by means of Linear
Programming (LP), unless P = NP.
In these cases, the convergence of the aforementioned algorithms are not guaranteed in a straightforward manner.

These exact algorithms, which have their roots in \emph{logic-based Benders' decomposition} \cite{Hooker03} (also see \cite{Codato06}),
have been successfully applied to solve several interval 0-1 min-max regret problems (\emph{e.g.}, \cite{Mo06,MonteGa04c,Pereira11}),
including interval 0-1 robust-hard problems, such as the robust traveling salesman problem \cite{MontemanniBarta07} and the robust set covering
problem \cite{Averbakh13}.
However, to the best of our knowledge, their convergence has not been explicitly proved for any interval 0-1 min-max regret problem.
In this work, we formally describe these algorithms through the definition of a logic-based Benders' decomposition framework and prove their finite convergence
to an optimal solution. Precisely, we show, by contradiction, that a new cut is always generated per iteration, in a finite space of possible solutions.
As the framework is applicable to any interval 0-1 min-max regret problem, its finite optimal
convergence also holds in solving interval 0-1 robust-hard problems, \emph{i.e.}, in the cases where the separation subproblems are NP-hard.

The remainder of this work is organized as follows. The Benders' decomposition method is briefly introduced in Sect.~\ref{s_logic_benders},
followed by the description of a standard modeling technique for interval 0-1 min-max regret problems (Sect.~\ref{s_model}).
In addition, a generalization of state-of-the-art exact algorithms for interval {0-1} min-max regret problems is devised through the description of a
logic-based Benders' decomposition framework (Sect.~\ref{s_benders}).
The finite convergence of these algorithms to optimal solutions is proved in the same section, and concluding remarks are given in the last section.

\section{A Logic-Based Benders' Decomposition Framework for Interval 0-1 Min-max Regret Problems}
\label{s_logic_benders}

The \emph{classical Benders' decomposition method} \cite{Benders62} was originally proposed to tackle Mixed Integer Linear Programming (MILP) problems of the form
$P: \min\{cx + dy:\, Ax + By \geq b,\, x \in \mathbb{Z}_+^{n_1},\, y \geq \textbf{0}\}$. In this case,
there are $n_1$ integer variables and $n_2$ continuous ones, which are represented by the column vectors $x$ and $y$, respectively,
and their corresponding cost values are given by the row vectors $c$ and $d$. Moreover, $b$ is an $m$-dimensional column vector, and $A$ and $B$ are $m \times n_1$ and $m \times n_2$
matrices, respectively.
Given a vector $\bar{x} \in \mathbb{Z}_+^{n_1}$, the \emph{classical Benders' reformulation} starts by defining an LP \emph{primal subproblem} $PS(\bar{x}): \min\{dy:\, By \geq b - A\bar{x},\, y \geq \textbf{0}\}$ through the projection of the continuous variables $y$
in the space defined by $\bar{x}$.
Notice that $PS(\bar{x})$ can be represented by means of the corresponding \emph{dual subproblem} $DS(\bar{x}): \max\{\mu(b-A\bar{x}):\, \mu B \leq d,\, \mu \geq \textbf{0}\}$,
where $\mu$ is an $m$-dimensional row vector referred to the dual variables.

Let $EP(\bar{x})$ and $ER(\bar{x})$ be, respectively, the sets of extreme points and extreme rays of a given $DS(\bar{x})$ subproblem.
One may observe that the feasible region of $DS(\bar{x})$ does not depend on the value assumed by $\bar{x}$.
Thus, hereafter, these sets are referred to as $EP$ and $ER$ for all $\bar{x} \in \mathbb{Z}_+^{n_1}$.

Considering a nonnegative continuous variable $\rho$, the resolution of each dual subproblem $DS(\bar{x})$ leads to a new linear
constraint (i) $\rho \geq {{\bar{\mu}}}(b-Ax)$, if $DS(\bar{x})$ has a bounded
optimal solution ${\bar{\mu}} \in EP$, or (ii) ${{\bar{\nu}}}(b-Ax) \leq 0$, if $DP$ has an unbounded solution, represented
by ${\bar{\nu}} \in ER$. The Benders' cuts described in (i) are called \emph{optimality cuts}, whereas the ones in (ii) are called
\emph{feasibility cuts}. Both types of cuts are used to populate on the fly a \emph{reformulated problem}, defined as:
\begin{eqnarray}
  \mbox{$(RP)\quad$}\min &&\{cx + \rho\}\\
  s.t.  && \rho \geq {{\bar{\mu}}}(b-Ax) \quad \forall\, {\bar \mu} \in EP, \label{rp01}\\
	&& {{\bar{\nu}}}(b-Ax) \leq 0 \quad \forall\, {\bar \nu} \in ER, \label{rp02}\\
	&& \rho \geq 0,\\
	&& x \in \mathbb{Z}_+^{n_1}.
\end{eqnarray}

Let $MRP$ be a relaxed $RP$ problem, called \emph{master problem}, which considers only a subset of the extreme points and extreme rays associated with
constraints (\ref{rp01}) and (\ref{rp02}), respectively.
At each iteration of the \emph{classical Benders' decomposition algorithm}, the master problem is solved,
obtaining a solution $(\bar{x}, \bar{\rho}) \in \mathbb{Z}_+^{n_1} \times \mathbb{R}_+$.
If $(\bar{x}, \bar{\rho})$ is not feasible for the original reformulated problem $RP$,
a corresponding $DS(\bar{x})$ subproblem is solved in order to generate a new Benders' cut,
either a feasibility cut or an optimality one. The new cut is added to the master problem $MRP$ and the algorithm iterates until a
feasible (and, therefore, optimal) solution for $RP$ is found.

The finite convergence of the classical Benders' decomposition method is guaranteed by the fact that the polyhedron referred to any LP problem can be described by finite sets of extreme points and extreme rays,
and that a new Benders' cut is generated per iteration of the algorithm. We refer to \cite{Benders62} for the detailed proof.
Methodologies able to improve the convergence of the method were studied in several works (see, \emph{e.g.}, \cite{Fischetti10,Magnanti81,McDaniel77}).
In addition, nonlinear convex duality theory was later applied to devise a generalized approach, namely \emph{generalized Benders' decomposition},
applicable to mixed integer nonlinear problems \cite{Geoffrion72}.

More recently, Hooker and Ottosson \cite{Hooker03} introduced the idea of the so-called \emph{logic-based Benders' decomposition},
a Benders-like decomposition approach that is suitable for a broader class of problems. In fact, the latter approach is intended to tackle any
optimization problem by exploring the concept of an \emph{inference dual subproblem}. In particular, that is the problem of inferring a strongest possible bound for
a set of constraints from the original problem that are relaxed in the master problem.
Notice that the aforementioned inference subproblems are not restricted to linear and nonlinear continuous problems.
In fact, they can even be NP-hard combinatorial problems (see, \emph{e.g.}, \cite{Cote14}).
Therefore, the convergence of the logic-based Benders' decomposition method cannot be showed in a straightforward manner
for all classes of optimization problems.
As pointed out in \cite{Hooker03}, the convergence of the method relies on some peculiarities
of the (logic-based) Benders' reformulation, such as the way the inference dual subproblems are devised and the finiteness of the search space referred
to them.

In the remainder of this section, we describe a framework that generalizes state-of-the-art logic-based Benders' decomposition algorithms widely used to solve interval 0-1
min-max regret problems \cite{Mo06,MontemanniBarta07,MonteGa04c,Pereira11,Averbakh13}. In addition, we show its convergence to an optimal solution
in a finite number of iterations. The framework addresses MILP formulations with typically
an exponential number of constraints, as detailed in the sequel.

\subsection{Mathematical Formulation}
\label{s_model}
Consider $\mathcal{G}$, a generic BILP minimization problem defined as follows.
\begin{eqnarray}
  \mbox{$(\mathcal{G})\quad$}\min && cx \label{bilp00}\\
  s.t.  && Ax \geq b, \label{bilp01}\\
	&& x \in \{0,1\}^n. \label{bilp02}
\end{eqnarray}

The binary variables are represented by an $n$-dimensional column vector $x$, whereas their corresponding cost values are given by an
$n$-dimensional row vector $c$. Moreover, $b$ is an $m$-dimensional
column vector, and $A$ is an $m \times n$ matrix.
The feasible region of $\mathcal{G}$ is given by $\Omega = \{x:\, Ax \geq b, \, x \in \{0,1\}^n\}$.
We highlight that, although the results of this work are presented by the assumption of $\mathcal{G}$ being a minimization problem,
they also hold for interval 0-1 min-max regret versions of maximization problems, with minor modifications.

Now, let $\mathcal{R}$ be an interval min-max regret RO version of $\mathcal{G}$, where a continuous cost interval $[l_i,u_i]$, with $l_i,u_i \in \mathbb{Z}_+$ and $l_i \leq u_i$, is associated with each binary variable $x_i$, $i =1,\ldots, n$. The following definitions describe $\mathcal{R}$ formally.

\begin{definition}
A \emph{scenario} $s$ is an assignment of costs to the binary variables, i.e., a cost $c_{i}^s \in [{l}_{i},{u}_{i}]$ is fixed for all $\, x_i$, $i =1,\ldots, n$.
\end{definition}

Let $\mathcal{S}$ be the set of all possible cost scenarios, which consists of the cartesian product of the continuous
intervals $[l_i,u_i]$, $i = 1, \ldots, n$.
The cost of a solution $x \in \Omega$ in a scenario $s \in \mathcal{S}$ is given by $c^sx = \sum\limits_{i = 1}^{n}{c^s_ix_i}$.

\begin{definition}
A solution $opt(s) \in \Omega$ is said to be \emph{optimal} for a scenario $s \in \mathcal{S}$ if it has the
smallest cost in $s$ among all the solutions in $\Omega$, i.e., $opt(s) = \arg\min\limits_{x \in \Omega}{c^sx}$.
\end{definition}

\begin{definition}
The \emph{regret (robust deviation)} of a solution $x \in \Omega$ in a scenario $s \in \mathcal{S}$, denoted by $r_x^s$,
is the difference between the cost of $x$ in $s$ and the cost of $opt(s)$ in $s$, i.e.,
$r_x^s = c^sx - c^s{opt(s)}$.
\end{definition}

\begin{definition}
The \emph{robustness cost} of a solution $x \in \Omega$, denoted by $R_x$, is the maximum
regret of $x$ among all possible scenarios, i.e., $R_x = \max\limits_{s \in \mathcal{S}}{r^{s}_x}$.
\end{definition}

\begin{definition}
A solution $x^* \in \Omega$ is said to be \emph{robust} if it has the smallest robustness cost
among all the solutions in $\Omega$, i.e., $x^* = \arg\min\limits_{x \in \Omega}{R_{x}}$.
\end{definition}

\begin{definition}
The \emph{interval min-max regret problem} $\mathcal{R}$ consists in finding a robust solution $x^* \in \Omega$.
\end{definition}

For each scenario $s \in \mathcal{S}$, let $\mathcal{G}(s)$ denote the corresponding problem $\mathcal{G}$ under cost vector $c^s \in \mathbb{R}_+^n$,
\emph{i.e.}, the problem of finding an optimal solution $opt(s)$ for $s$.
Also consider $y$, an $n$-dimensional vector of binary variables.
Then, $\mathcal{R}$ can be generically modeled as follows.

\begin{eqnarray}
  \mbox{$(\mathcal{R})\quad$}\min && \max\limits_{s\in \mathcal{S}}\;(c^sx - \overbrace{\min\limits_{y \in \Omega}{c^sy}}^{(\mathcal{G}(s))}) \label{pr00}\\
  s.t. && x \in \Omega.
\end{eqnarray}

The basic result presented below has been explicitly proved for several interval min-max regret problems (see, \emph{e.g.}, \cite{Karasan01,MontemanniBarta07,YaKaPi01})
and generalized for the case of interval 0-1 min-max regret problems \cite{Aissi09} (also see \cite{Averbakh01}).

\begin{proposition}[Aissi et al. \cite{Aissi09}]
\label{prop01}
The regret of any feasible solution ${x} \in \Omega$ is maximum in the scenario $s({{x}})$ induced by ${x}$, defined as follows:

\[
\textrm{for all } \; i \in \{1,\ldots,n\},\quad c^{s({x})}_i = \left\{
\begin{array}{lll}
        \; u_i, & \textrm{if ${x}_i = 1$,} \\
        \; l_i, & \textrm{if ${x}_i = 0$.} 
\end{array}\right. \]

\end{proposition}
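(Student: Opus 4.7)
The plan is to fix an arbitrary solution $x \in \Omega$ and an arbitrary scenario $s \in \mathcal{S}$, and show directly that $r^s_x \leq r^{s(x)}_x$. Let $y^s = opt(s)$ and $y^* = opt(s(x))$, so that $r^s_x = c^s x - c^s y^s$ and $r^{s(x)}_x = c^{s(x)} x - c^{s(x)} y^*$.

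First, I would use the optimality of $y^*$ for scenario $s(x)$: since $y^s \in \Omega$, we have $c^{s(x)} y^* \leq c^{s(x)} y^s$, hence
\[
r^{s(x)}_x = c^{s(x)} x - c^{s(x)} y^* \;\geq\; c^{s(x)} x - c^{s(x)} y^s.
\]
So it suffices to prove $c^s x - c^s y^s \leq c^{s(x)} x - c^{s(x)} y^s$, equivalently,
\[
\sum_{i=1}^{n} \bigl(c^{s(x)}_i - c^s_i\bigr)\bigl(x_i - y^s_i\bigr) \;\geq\; 0.
\]

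The core of the argument is then a term-by-term check of this sum, using the definition of $s(x)$. For indices with $x_i = y^s_i$ the term is zero. For $x_i = 1$ and $y^s_i = 0$, the factor $x_i - y^s_i = 1$ is positive, while $c^{s(x)}_i = u_i \geq c^s_i$ gives the other factor nonnegative, so the product is $\geq 0$. For $x_i = 0$ and $y^s_i = 1$, the factor $x_i - y^s_i = -1$ is negative, while $c^{s(x)}_i = l_i \leq c^s_i$ makes the other factor nonpositive, so again the product is $\geq 0$. Summing yields the desired inequality, and combining with the previous step concludes $r^s_x \leq r^{s(x)}_x$, so the maximum of $r^s_x$ over $s \in \mathcal{S}$ is attained at $s(x)$.

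There is no real obstacle here; the argument is essentially a rearrangement/exchange argument. The only subtlety worth flagging is that one must resist the temptation to compare $opt(s)$ with $opt(s(x))$ directly — instead, the trick is to evaluate $y^s$ (the optimum for $s$) in the induced scenario $s(x)$, which is feasible since $\Omega$ does not depend on the scenario, and to exploit the optimality of $y^*$ in $s(x)$ only as a one-sided bound. This is the same style of proof used in the references \cite{Karasan01,YaKaPi01} for specific interval min-max regret problems, and it extends verbatim to the generic 0-1 setting here because it never uses any structural property of $\Omega$ beyond its scenario-independence.
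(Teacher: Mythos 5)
Your proof is correct and is the standard exchange argument for this result; the paper itself does not prove Proposition~\ref{prop01} but cites it from Aissi et al.~\cite{Aissi09}, and your argument matches the one given there and in the problem-specific references. The two key steps --- bounding $c^{s(x)}opt(s(x))$ by $c^{s(x)}opt(s)$ via optimality in the induced scenario, and the term-by-term sign check of $\sum_i (c^{s(x)}_i - c^s_i)(x_i - y^s_i) \geq 0$ --- are exactly what is needed, and you correctly use only the scenario-independence of $\Omega$.
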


From Proposition~\ref{prop01}, $\mathcal{R}$ can be rewritten as
\begin{eqnarray}
  \mbox{$(\tilde{\mathcal{R}})\quad$}\min && \Big(c^{s(x)}x - \min\limits_{y \in \Omega}{c^{s(x)}y}\Big) \label{r00} \\
  s.t. && x \in \Omega \label{r01}.
\end{eqnarray}

One may note that the inner minimization in (\ref{r00}) does not define an LP problem, but a BILP one. Since, in general, there is no guarantee of
integrality in solving the linear relaxation of this problem, we cannot represent it by means of extreme points and extreme rays, as in a classical Benders'
reformulation \cite{Benders62}.
Alternatively, we reformulate $\tilde{\mathcal{R}}$ by adding a free variable $\rho$ and linear constraints that explicitly bound $\rho$ with
respect to all the feasible solutions that $y$ can represent.
The resulting MILP formulation (see, \emph{e.g.}, \cite{Aissi09}) is provided from \eqref{f00} to \eqref{f03}.
\begin{eqnarray}
  \mbox{$(\mathcal{F})\quad$}\min && (\sum\limits_{i = 1}^{n}{u_ix_i} - \rho) \label{f00} \\
  s.t. && \rho \leq {\sum\limits_{i = 1}^{n}{(l_i + (u_i - l_i)x_i)\bar{y}_i}} \quad \forall\, \bar{y} \in \Omega, \label{f01} \\
        && x \in \Omega,   \label{f02} \\
        && \rho \mbox{ free}. \label{f03}
\end{eqnarray}

Constraints (\ref{f01}) ensure that $\rho$ does not exceed the value related to the inner minimization in (\ref{r00}).
Note that, in (\ref{f01}), $\bar{y}$ is a constant vector, one for each solution in $\Omega$. These constraints are tight whenever
$\bar{y}$ is optimal for the classical counterpart problem $\mathcal{G}$ in the scenario $s(x)$.
Constraints (\ref{f02}) define the feasible region referred to the $x$ variables, and constraint (\ref{f03}) gives the domain of the variable $\rho$.
Notice that the feasibility of $\mathcal{F}$ solely relies on the feasibility of the corresponding classical optimization problem $\mathcal{G}$.
Thus, for simplicity, we assume that $\mathcal{F}$ is feasible in the remainder of this work.

The number of constraints (\ref{f01}) corresponds to the number of feasible solutions in $\Omega$.
As the size of this region may grow exponentially with the number of binary variables,
this fomulation is particularly suitable to be handled by decomposition methods, such as the logic-based Benders' decomposition framework detailed below.

\subsection{Logic-Based Benders' Algorithm}
\label{s_benders}

The logic-based Benders' algorithm here described relies on the fact that, since several of constraints~(\ref{f01}) might be inactive at optimality,
they can be generated on demand whenever they are violated.
In this sense, given a set $\Gamma \subseteq \Omega$, $\Gamma \neq \emptyset$, consider the \emph{relaxed robustness cost} metric defined as follows. 

\begin{definition}
A solution $opt(s,\Gamma) \in \Gamma$ is said to be $\Gamma$-\emph{relaxed optimal} for a scenario $s \in \mathcal{S}$ if it has the
smallest cost in $s$ among all the solutions in $\Gamma$, i.e., $opt(s,\Gamma) = \arg\min\limits_{x \in \Gamma}{c^{s}x}$.
\end{definition}

\begin{definition}
The $\Gamma$-\emph{relaxed robustness cost} of a solution $x \in \Omega$, denoted by $R_x^{\Gamma}$, is the difference between the cost
of $x$ in the scenario $s(x)$ induced by $x$ and the cost of a $\Gamma$-relaxed optimal solution $opt(s(x),\Gamma)$ in $s(x)$,
i.e., $R_x^{\Gamma} = c^{s(x)}x - c^{s(x)}opt(s(x),\Gamma)$.
\end{definition}

\begin{proposition}
\label{prop_lb}
For any $\Gamma \subseteq \Omega$, $\Gamma \neq \emptyset$, and any solution $x \in \Omega$, the $\Gamma$-relaxed robustness cost 
$R^{\Gamma}_x$ of $x$ gives a lower bound on the robustness cost $R_x$ of $x$.
\end{proposition}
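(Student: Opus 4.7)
The plan is to reduce the claim to a simple monotonicity-of-minimum argument, using Proposition~\ref{prop01} as the main lever to rewrite $R_x$ in a convenient closed form. Concretely, I will first invoke Proposition~\ref{prop01} to replace the maximization over the infinite scenario set $\mathcal{S}$ in the definition of $R_x$ by evaluation in the single induced scenario $s(x)$, obtaining
\[
R_x \;=\; r^{s(x)}_x \;=\; c^{s(x)}x \;-\; c^{s(x)}\,opt(s(x)) \;=\; c^{s(x)}x \;-\; \min_{y \in \Omega} c^{s(x)} y.
\]

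Next, I would write $R_x^{\Gamma}$ in the analogous form directly from its definition, namely
\[
R_x^{\Gamma} \;=\; c^{s(x)}x \;-\; c^{s(x)}\,opt(s(x),\Gamma) \;=\; c^{s(x)}x \;-\; \min_{y \in \Gamma} c^{s(x)} y,
\]
so that the two quantities differ only in the feasible region over which the inner minimization is performed. The bound $R_x^{\Gamma} \le R_x$ then follows from the hypothesis $\Gamma \subseteq \Omega$: since any feasible point for the $\Gamma$-minimization is also feasible for the $\Omega$-minimization, we have $\min_{y \in \Gamma} c^{s(x)} y \ge \min_{y \in \Omega} c^{s(x)} y$, and subtracting this inequality from the common term $c^{s(x)}x$ reverses it to give the desired conclusion.

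There is no real obstacle here: the nonemptiness assumption on $\Gamma$ is needed only to ensure that $opt(s(x),\Gamma)$ is well-defined, and the fact that $c^{s(x)}$ is applied to a binary vector does not interact with the inequality. The only thing to be a bit careful about is to justify the first step cleanly, stressing that Proposition~\ref{prop01} allows us to bypass the $\max_{s \in \mathcal{S}}$ over a continuum of scenarios and work with $s(x)$ alone, so that both $R_x$ and $R_x^\Gamma$ can be compared in the same scenario. Once this is in place, the proof is a one-line inequality.
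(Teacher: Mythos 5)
Your proposal is correct and follows essentially the same route as the paper's own proof: both invoke Proposition~\ref{prop01} to reduce $R_x$ to the induced scenario $s(x)$ and then observe that, since $\Gamma \subseteq \Omega$, the optimum over $\Omega$ is at least as good as the optimum over $\Gamma$ (the paper phrases this as $c^{s(x)}opt(s(x)) \leq c^{s(x)}x'$ for all $x' \in \Omega$, including $opt(s(x),\Gamma)$, which is your monotonicity-of-minimum step). No substantive difference.
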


\begin{proof}
Consider a set $\Gamma \subseteq \Omega$, $\Gamma \neq \emptyset$, and a solution $x \in \Omega$.
According to Proposition~\ref{prop01}, the robustness cost of $x$ is given by $R_x = r_x^{s(x)} = c^{s(x)}x -
c^{s(x)}opt(s(x))$, where $opt(s(x))$ is an optimal solution for the scenario $s(x)$ induced by $x$. By definition, the
$\Gamma$-relaxed robustness cost of $x$ is given by $R_x^{\Gamma} = c^{s(x)}x - c^{s(x)}opt(s(x),\Gamma)$, where $opt(s(x),\Gamma)$
is a $\Gamma$-relaxed optimal solution for $s(x)$.
Notice that $c^{s(x)}{opt(s(x))} \leq c^{s(x)}x'$ for all $x' \in \Omega$, including $opt(s(x),\Gamma)$. Therefore,
\begin{eqnarray}
 R_x^{\Gamma} = c^{s(x)}x - c^{s(x)}{opt(s(x),\Gamma)} \leq c^{s(x)}x - c^{s(x)}{opt(s(x))} = R_x.
\end{eqnarray}\qed
\end{proof}

\begin{proposition}
\label{prop_conv1}
If $\Gamma = \Omega$, then, for any solution $x \in \Omega$, it holds that $R^{\Gamma}_x = R_x$.
\end{proposition}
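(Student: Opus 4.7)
The plan is to prove this by direct substitution into the definitions and then invoke Proposition~\ref{prop01}. The statement is essentially a sanity check: when we do not relax anything, the relaxed robustness cost coincides with the robustness cost.

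First, I would unfold the definition of $R_x^{\Gamma}$ for an arbitrary $x \in \Omega$, obtaining $R_x^{\Gamma} = c^{s(x)}x - c^{s(x)}opt(s(x),\Gamma)$, where $opt(s(x),\Gamma) = \arg\min_{x' \in \Gamma} c^{s(x)}x'$. Setting $\Gamma = \Omega$, the minimization ranges over all of $\Omega$, so $opt(s(x),\Omega)$ is, by definition, an optimal solution of $\mathcal{G}(s(x))$; that is, $opt(s(x),\Omega) = opt(s(x))$ (up to ties, both have the same cost in $s(x)$, which is all we need). Consequently, $R_x^{\Omega} = c^{s(x)}x - c^{s(x)}opt(s(x)) = r_x^{s(x)}$.

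Next, I would appeal to Proposition~\ref{prop01}, which asserts that the scenario $s(x)$ induced by $x$ maximizes the regret of $x$ over all scenarios, so $R_x = \max_{s \in \mathcal{S}} r_x^s = r_x^{s(x)}$. Combining with the previous identity yields $R_x^{\Gamma} = r_x^{s(x)} = R_x$, as desired.

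There is no real obstacle here; the argument is entirely a chaining of definitions with a single appeal to Proposition~\ref{prop01}. The only subtlety worth noting explicitly is that $opt(s(x))$ and $opt(s(x),\Omega)$ need not be the same vector if there are multiple optimizers, but they share the same objective value in $s(x)$, which is the only quantity that enters the formulas for $R_x$ and $R_x^{\Gamma}$; so the conclusion holds unambiguously.
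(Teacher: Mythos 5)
Your proof is correct and follows essentially the same route as the paper's: identify $opt(s(x),\Omega)$ with an optimal solution $opt(s(x))$ for the scenario $s(x)$, then invoke Proposition~\ref{prop01} to equate $r_x^{s(x)}$ with $R_x$. Your explicit remark about ties (that only the common objective value matters) is a small but welcome clarification that the paper leaves implicit.
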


\begin{proof}
Consider the set $\Gamma = \Omega$ and a solution $x \in \Omega$. In this case, a $\Gamma$-relaxed optimal solution
$opt(s(x),\Gamma)$ for $s(x)$ is also an optimal solution $opt(s(x))$ for this scenario.
Therefore, considering Proposition~\ref{prop01},
\begin{eqnarray}
 R_x^{\Gamma} = c^{s(x)}x - c^{s(x)}{opt(s(x),\Gamma)} = c^{s(x)}x - c^{s(x)}{opt(s(x))} = R_x.
\end{eqnarray}\qed
\end{proof}

\begin{definition}
A solution $\tilde{x}^* \in \Omega$ is said to be $\Gamma$-\emph{relaxed robust} if it has the smallest
$\Gamma$-relaxed robustness cost among all the solutions in $\Omega$, \emph{i.e.}, $\tilde{x}^* = \arg\min\limits_{x \in \Omega}{R_{x}^{\Gamma}}$.
\end{definition}

Considering the relaxed metric discussed above, we detail a logic-based Benders' algorithm to solve formulation $\mathcal{F}$, given by (\ref{f00})-(\ref{f03}).
The procedure is described in Algorithm~\ref{Benders01}.
Let ${\Omega}^{\psi} \subseteq {\Omega}$ be the set of solutions $\bar{y} \in \Omega$ (Benders' cuts) available at an iteration $\psi$. Also let
$\mathcal{F}^{\psi}$ be a relaxed version of $\mathcal{F}$ in which constraints (\ref{f01}) are replaced by
\begin{equation}
 \rho \leq \sum\limits_{i = 1}^{n}{( l_{i} + (u_{i} - l_{i})x_{i}){\bar y}_{i}} \quad \forall\, {\bar y} \in {\Omega}^{\psi}. \label{f04}
\end{equation}

Thus, the relaxed problem $\mathcal{F}^{\psi}$, called \emph{master problem}, is defined by (\ref{f00}), (\ref{f02}), (\ref{f03}) and (\ref{f04}).
One may observe that $\mathcal{F}^{\psi}$ is precisely the problem of finding a $\Gamma$-relaxed robust solution, with $\Gamma = \Omega^{\psi}$. 

Let $ub^{\psi}$ keep the best upper bound found (until an iteration $\psi$) on the solution of $\mathcal{F}$. 
Notice that, at the beginning of Algorithm~\ref{Benders01}, ${\Omega}^{1}$ contains the initial Benders' cuts available, whereas $ub^1$ keeps
the initial upper bound on the solution of $\mathcal{F}$.
In this case, ${\Omega}^{1}=\emptyset$ and $ub^1 := + \infty$.
At each iteration ${\psi}$, the algorithm obtains a solution by solving
a corresponding master problem  $\mathcal{F}^{\psi}$ and seeks a constraint (\ref{f01}) that is most violated by this solution.
Initially, no constraint (\ref{f04}) is considered, since ${\Omega}^{1}=\emptyset$. An initialization step is then necessary
to add at least one solution to ${\Omega}^{1}$, thus avoiding unbounded solutions during the first resolution of the master problem.
To this end, it is computed an optimal solution for the worst-case scenario $s_u$, in which $c^{s_u} = u$ (Step \rom{1}, Algorithm~\ref{Benders01}).

\begin{algorithm}[!ht]
\caption{Logic-based Benders' algorithm.}
\KwIn{Cost intervals $[l_i,u_i]$ referred to $x_i$, $i =1,\ldots, n$.}
\KwOut{$({\bar x}^*,R^*)$, where ${\bar x}^*$ is a robust solution for $\mathcal{F}$, and $R^*$ is its corresponding robustness cost.}
  $\psi := 1$; $ub^1 := + \infty$; ${\Omega}^{1} := \emptyset$;\\
 \textbf{Step \rom{1}. (Initialization)} \\
 Find an optimal solution ${\bar y}^1 = opt(s_{u})$ for the worst-case scenario $s_{u}$;\\
  ${\Omega}^{1} := {\Omega}^{1} \cup \{{\bar y}^1\}$;\\
  \textbf{Step \rom{2}. (Master problem)}\\ Solve the relaxed problem $\mathcal{F}^{\psi}$, obtaining a solution $({\bar x}^{\psi},{\bar \rho}^{\psi})$;\\
  \textbf{Step \rom{3}. (Separation subproblem)}\\ Find an optimal solution ${\bar y}^{\psi} = opt(s({{\bar x}^{\psi}}))$ for the scenario
  $s({{\bar x}^{\psi}})$ induced by ${\bar x}^{\psi}$
  and use it to compute $R_{{\bar x}^{\psi}}$, the robustness cost of ${\bar x}^{\psi}$;\\
 \textbf{Step \rom{4}. (Stopping condition)} \\
 $lb^{\psi} := {u}{\bar x}^{\psi} - {\bar \rho}^{\psi}$;\\
 \If{$lb^{\psi} \geq {R_{{\bar x}^{\psi}}}$}
 {
  ${\bar x}^* := {\bar x}^{\psi}$;\\
  $R^* := {R_{{\bar x}^{\psi}}}$;\\
  Return $({\bar x}^{*},R^*)$;
  }
 \Else{
 $ub^{\psi} := \min \{ub^{\psi}, R_{{\bar x}^{\psi}}\}$;\\
 $ub^{\psi+1} := ub^{\psi}$;\\
 ${\Omega}^{\psi+1} := {\Omega}^{\psi} \cup \{{\bar y}^{\psi}\}$;\\
 $\psi := \psi + 1$;\\
 Go to Step \rom{2};}

    \label{Benders01}
\end{algorithm}

After the initialization step, the iterative procedure takes place. At each iteration ${\psi}$, the corresponding relaxed problem $\mathcal{F}^{\psi}$
is solved (Step \rom{2}, Algorithm \ref{Benders01}), obtaining a solution $({\bar x}^{\psi},{\bar \rho}^{\psi})$.
Then, the algorithm checks if $({\bar x}^{\psi},{\bar \rho}^{\psi})$ violates any constraint (\ref{f01}) of the original problem $\mathcal{F}$, \emph{i.e.},
if there is a constraint (\ref{f04}) that should have been considered in $\mathcal{F}^{\psi}$ and was not. For this purpose, it is solved a \emph{separation
subproblem} that computes $R_{{\bar x}^{\psi}}$ (the actual robustness cost of ${\bar x}^{\psi}$) by finding an optimal solution
${\bar y}^{\psi} = opt(s({{\bar x}^{\psi})})$ for the scenario $s({{\bar x}^{\psi})}$ induced by ${\bar x}^{\psi}$
(see Step \rom{3}, Algorithm \ref{Benders01}). Notice that the separation subproblems involve solving a classical optimization problem $\mathcal{G}({{\bar x}^{\psi}})$,
\emph{i.e.}, problem $\mathcal{G}$, given by (\ref{bilp00})-(\ref{bilp02}), in the scenario $s({{\bar x}^{\psi}})$.

Let $lb^{\psi} = {u}{{\bar x}^{\psi}} - {\bar \rho}^{\psi}$ be the value of the objective function in (\ref{f00}) related
to the solution $({\bar x}^{\psi},{\bar \rho}^{\psi})$ of the current master problem $\mathcal{F}^{\psi}$. Notice that, considering
$\Gamma = \Omega^{\psi}$, $lb^{\psi}$ corresponds to the $\Gamma$-relaxed robustness cost of ${\bar x}^{\psi}$. Thus, according to
Proposition~\ref{prop_lb}, $lb^{\psi}$ gives a lower (dual) bound on the solution of $\mathcal{F}$. Moreover, since ${\bar x}^{\psi}$ is a
feasible solution in $\Omega$, its robustness cost $R_{{\bar x}^{\psi}}$ gives an upper (primal) bound on the solution of $\mathcal{F}$.
Accordingly, if $lb^{\psi}$ reaches $R_{{\bar x}^{\psi}}$, the algorithm stops. Otherwise,
$ub^{\psi}$ and $ub^{\psi+1}$ are both set to the best upper bound found by the algorithm until the iteration $\psi$. In addition, a new constraint
(\ref{f04}) is generated from ${\bar y}^{\psi}$ and added to
$\mathcal{F}^{\psi+1}$ by setting ${\Omega}^{\psi+1} := {\Omega}^{\psi} \cup \{{\bar y}^{\psi}\}$ (see Step \rom{4} of
Algorithm \ref{Benders01}). Notice that the algorithm stops when the value ${\bar \rho}^{\psi}$ corresponds to the cost of
${\bar y}^{\psi} = opt(s({{\bar x}^{\psi})})$ in the scenario $s({{\bar x}^{\psi}})$, \emph{i.e.}, the optimal solution for $\mathcal{F}^{\psi}$ is also feasible
(and, therefore, optimal) for the original problem $\mathcal{F}$.
The convergence of the algorithm is ensured by Proposition~\ref{prop_conv1} and the following results.

\begin{lemma}
 \label{lemma_conv2}
 Every separation subproblem that arises during the execution of Algorithm~\ref{Benders01} is feasible.
\end{lemma}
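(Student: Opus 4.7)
The plan is to show that the separation subproblem at every iteration is precisely an instance of the classical problem $\mathcal{G}$ restricted to $\Omega$, and that its feasible region is independent of the iteration and is nonempty by the standing assumption on $\mathcal{F}$.

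First, I would unpack what the separation subproblem actually is. At iteration $\psi$, Step III solves $\mathcal{G}(s(\bar{x}^{\psi}))$, namely $\min\{c^{s(\bar{x}^{\psi})}y : y \in \Omega\}$. The data that depends on the iteration is only the cost vector $c^{s(\bar{x}^{\psi})}$ (induced by $\bar{x}^{\psi}$ as in Proposition~\ref{prop01}); the feasible region is exactly $\Omega$, the same set of 0-1 vectors for every iteration $\psi$.

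Next I would invoke the paper's standing assumption, stated right after formulation $\mathcal{F}$, that $\mathcal{F}$ is feasible, which in turn was justified by the remark that the feasibility of $\mathcal{F}$ relies solely on the feasibility of $\mathcal{G}$. Consequently $\Omega \neq \emptyset$. (As a cross-check, one can note that at the iteration under consideration the master problem $\mathcal{F}^{\psi}$ produced a solution $(\bar{x}^{\psi},\bar{\rho}^{\psi})$ with $\bar{x}^{\psi} \in \Omega$ by constraint~\eqref{f02}, so $\Omega$ is nonempty in any iteration where a separation subproblem is actually formed.) Combining these two observations, the separation subproblem, having feasible region $\Omega \neq \emptyset$, is feasible.

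As a brief addendum I would remark that the subproblem is not only feasible but also admits an optimal solution: $\Omega \subseteq \{0,1\}^{n}$ is finite, and the cost coefficients $c^{s(\bar{x}^{\psi})}_i \in [l_i,u_i] \subseteq \mathbb{Z}_+$ are finite nonnegative scalars, so the minimum in $\min_{y \in \Omega} c^{s(\bar{x}^{\psi})}y$ is attained. I do not foresee a genuine obstacle here; the only subtlety is making explicit that what changes across iterations is the objective, not the feasible region, so that the lemma reduces to a single-line consequence of the feasibility assumption on $\mathcal{F}$.
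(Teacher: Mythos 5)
Your proposal is correct and follows essentially the same route as the paper: the separation subproblem is an instance of $\mathcal{G}$ over the iteration-independent feasible region $\Omega$, which is nonempty by the standing feasibility assumption on $\mathcal{F}$. Your addendum on attainment of the minimum (finiteness of $\Omega$) is a harmless extra observation that the paper leaves implicit.
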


\begin{proof}
Assuming $\mathcal{F}$ feasible, we must have $\Omega \neq \emptyset$. This implies the existence of
at least one feasible solution for every scenario $s \in S$, and, thus, any classical problem $\mathcal{G}$ that arises while executing Algorithm~\ref{Benders01} is
feasible.\qed
\end{proof}

\begin{proposition}
 \label{prop_conv3}
 At each iteration $\psi \geq 1$ of Algorithm~\ref{Benders01}, if the stopping condition is not satisfied, then
 the resolution of the corresponding separation subproblem leads to a new solution
 $\bar{y}^{\psi} \in \Omega \backslash \Omega^{\psi}$.
\end{proposition}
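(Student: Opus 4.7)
The plan is a proof by contradiction. I will suppose that the subproblem solution $\bar{y}^{\psi}$ already lies in $\Omega^{\psi}$ and show that this forces $lb^{\psi} = R_{\bar{x}^{\psi}}$, contradicting the hypothesis that the stopping condition failed. The key algebraic observation I intend to exploit is that the right-hand side of each Benders' cut (\ref{f04}), evaluated at $\bar{x}^{\psi}$, coincides with the cost $c^{s(\bar{x}^{\psi})}\bar{y}$ of the cut's defining solution in the scenario induced by $\bar{x}^{\psi}$. Concretely, because $c^{s(\bar{x}^{\psi})}_i = u_i$ whenever $\bar{x}^{\psi}_i = 1$ and $c^{s(\bar{x}^{\psi})}_i = l_i$ whenever $\bar{x}^{\psi}_i = 0$, the quantity $l_i + (u_i - l_i)\bar{x}^{\psi}_i$ equals exactly $c^{s(\bar{x}^{\psi})}_i$ for every $i$. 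This identity is the linchpin of the argument.

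Assume then that $\bar{y}^{\psi} \in \Omega^{\psi}$. The cut (\ref{f04}) induced by $\bar{y}^{\psi}$ is already present in $\mathcal{F}^{\psi}$, so any feasible $(\bar{x}^{\psi},\bar{\rho}^{\psi})$ must satisfy $\bar{\rho}^{\psi} \leq \sum_{i=1}^{n}(l_i + (u_i-l_i)\bar{x}^{\psi}_i)\bar{y}^{\psi}_i = c^{s(\bar{x}^{\psi})}\bar{y}^{\psi}$. I will also use that $u\bar{x}^{\psi} = c^{s(\bar{x}^{\psi})}\bar{x}^{\psi}$, which follows from the same identity restricted to the support of $\bar{x}^{\psi}$. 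Substituting these two facts into $lb^{\psi} = u\bar{x}^{\psi} - \bar{\rho}^{\psi}$ yields
\begin{equation*}
lb^{\psi} \;\geq\; c^{s(\bar{x}^{\psi})}\bar{x}^{\psi} - c^{s(\bar{x}^{\psi})}\bar{y}^{\psi} \;=\; c^{s(\bar{x}^{\psi})}\bar{x}^{\psi} - c^{s(\bar{x}^{\psi})}opt(s(\bar{x}^{\psi})) \;=\; R_{\bar{x}^{\psi}},
\end{equation*}
where the middle equality uses that $\bar{y}^{\psi} = opt(s(\bar{x}^{\psi}))$ is produced by the separation subproblem (which is feasible by Lemma~\ref{lemma_conv2}).

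For the reverse inequality, I will invoke Proposition~\ref{prop_lb} with $\Gamma = \Omega^{\psi}$: the optimal value $lb^{\psi}$ of $\mathcal{F}^{\psi}$ is precisely the minimum $\Gamma$-relaxed robustness cost over $\Omega$, hence at most $R^{\Omega^{\psi}}_{\bar{x}^{\psi}} \leq R_{\bar{x}^{\psi}}$. Combining both bounds gives $lb^{\psi} = R_{\bar{x}^{\psi}}$, which is the stopping condition tested in Step~\rom{4}, contradicting the hypothesis. Therefore $\bar{y}^{\psi} \notin \Omega^{\psi}$, as claimed.

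The only step I expect to require care is making the bookkeeping around the two cost identities $u\bar{x}^{\psi} = c^{s(\bar{x}^{\psi})}\bar{x}^{\psi}$ and $\sum_i (l_i + (u_i-l_i)\bar{x}^{\psi}_i)\bar{y}^{\psi}_i = c^{s(\bar{x}^{\psi})}\bar{y}^{\psi}$ transparent, since the whole proof collapses onto them; everything else is a one-line chain of (in)equalities once those are in place.
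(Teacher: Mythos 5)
Your proof is correct and follows essentially the same contradiction argument as the paper: assuming $\bar{y}^{\psi}\in\Omega^{\psi}$, the cut it defines already bounds $\bar{\rho}^{\psi}$ from above by $c^{s(\bar{x}^{\psi})}\bar{y}^{\psi}$, which forces $lb^{\psi}\geq R_{\bar{x}^{\psi}}$ and contradicts the failed stopping test; the paper phrases this same step through the $\Gamma$-relaxed optimum $opt(s(\bar{x}^{\psi}),\Omega^{\psi})$ rather than through the explicit constraint, but the content is identical. Your final ``reverse inequality'' paragraph is harmless but unnecessary, since the stopping condition in Step \rom{4} is the inequality $lb^{\psi}\geq R_{\bar{x}^{\psi}}$, which your first chain of (in)equalities already establishes.
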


\begin{proof}
 Consider an iteration $\psi \geq 1$ of Algorithm~\ref{Benders01} and assume, by contradiction, that (\rom{1})
 the stopping condition is not satisfied, and (\rom{2}) the resolution of the separation subproblem of the current iteration $\psi$ does not
 lead to a solution in $\Omega \backslash \Omega^{\psi}$.
 Let $({\bar x}^{\psi},{\bar \rho}^{\psi})$ be the solution obtained from the resolution of the corresponding master problem $\mathcal{F}^{\psi}$.
From Lemma~\ref{lemma_conv2}, the subproblem referred to iteration $\psi$ is feasible, and, thus, its resolution leads to an optimal solution
${\bar y}^{\psi} = opt(s({{\bar x}^{\psi})}) \in \Omega$ for the scenario $s({{\bar x}^{\psi}})$ induced by ${{\bar{x}}^{\psi}}$.
From assumption (\rom{1}), we must have $lb^{\psi} < R_{{\bar x}^{\psi}}$. Considering Proposition~\ref{prop01} and letting
$\Gamma = \Omega^{\psi}$, we have that $lb^{\psi} = R_{{\bar x}^{\psi}}^{\Gamma}$, and, moreover,
\begin{align}
 c^{s({{\bar x}^{\psi}})}{{\bar x}^{\psi}} - c^{s({{\bar x}^{\psi}})}{opt(s({{\bar x}^{\psi}}),\Gamma)} & = R_{{\bar x}^{\psi}}^{\Gamma} \label{prop_conv3.0}\\
  & = lb^{\psi} \\
  & < R_{{\bar x}^{\psi}}\\ 
  & = c^{s({{\bar x}^{\psi}})}{{\bar x}^{\psi}} - c^{s({{\bar x}^{\psi}})}{opt(s({{\bar x}^{\psi}}))} \label{prop_conv3.1},
\end{align}

\noindent where ${opt(s({{\bar x}^{\psi}}),\Gamma)}$ is a $\Gamma$-relaxed optimal solution for $s({{\bar x}^{\psi}})$, and ${opt(s({{\bar x}^{\psi}}))}$
is an optimal solution for $s({{\bar x}^{\psi}})$. From (\ref{prop_conv3.0})-(\ref{prop_conv3.1}), we obtain
\begin{equation}
\label{prop_conv3.2}
c^{s({{\bar x}^{\psi}})}{{opt}(s({{\bar x}^{\psi}}),\Gamma)} > c^{s({{\bar x}^{\psi}})}{opt(s({{\bar x}^{\psi}}))}.
\end{equation}

Notice that, since $\bar{y}^{\psi}$ is also an optimal solution for $s({{\bar x}^{\psi}})$, it follows, from (\ref{prop_conv3.2}), that 
 \begin{equation}
 \label{prop_conv3.3}
 c^{s({{\bar x}^{\psi}})}{opt(s({{\bar x}^{\psi}}),\Gamma)} > c^{s({{\bar x}^{\psi}})}{opt(s({{\bar x}^{\psi}}))} = c^{s({{\bar x}^{\psi}})}{{\bar y}^{\psi}}.
 \end{equation}

Nevertheless, as ${opt(s({{\bar x}^{\psi}}),\Gamma)}$ is a $\Gamma$-relaxed optimal solution for $s({{\bar x}^{\psi}})$, and,
from Lemma~\ref{lemma_conv2} and assumption (\rom{2}), ${\bar y}^{\psi}$ belongs to $\Omega^{\psi} = \Gamma$, we also have that 
 \begin{equation}
 \label{prop_conv3.4}
  c^{s({{\bar x}^{\psi}})}{opt(s({{\bar x}^{\psi}}),\Gamma)} \leq c^{s({{\bar x}^{\psi}})}{{\bar y}^{\psi}},
 \end{equation}

\noindent which, considering (\ref{prop_conv3.3}), defines a contradiction.\qed
\end{proof}

\begin{theorem}
Algorithm~\ref{Benders01} solves problem $\mathcal{F}$ at optimality within a finite number of iterations.
\end{theorem}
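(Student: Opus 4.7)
The plan is to combine the three preceding results (Proposition~\ref{prop_conv1}, Lemma~\ref{lemma_conv2}, and Proposition~\ref{prop_conv3}) with the observation that the search space is finite in order to argue both finite termination and optimality upon termination.

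First I would establish finiteness. Since $\Omega \subseteq \{0,1\}^n$, we have $|\Omega| \leq 2^n < \infty$. Algorithm~\ref{Benders01} maintains a monotone chain $\Omega^1 \subseteq \Omega^2 \subseteq \cdots \subseteq \Omega$. By Proposition~\ref{prop_conv3}, whenever iteration $\psi$ does not trigger the stopping condition, the newly generated $\bar{y}^{\psi}$ lies in $\Omega \setminus \Omega^{\psi}$, so $|\Omega^{\psi+1}| \geq |\Omega^{\psi}| + 1$. Consequently, after at most $|\Omega|$ iterations, either the stopping condition has been met or $\Omega^{\psi} = \Omega$. In the latter case, setting $\Gamma = \Omega^{\psi} = \Omega$ and invoking Proposition~\ref{prop_conv1} gives $R^{\Gamma}_{\bar{x}^{\psi}} = R_{\bar{x}^{\psi}}$; since $lb^{\psi} = R^{\Gamma}_{\bar{x}^{\psi}}$, the test $lb^{\psi} \geq R_{\bar{x}^{\psi}}$ holds and the algorithm halts. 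Thus the procedure terminates in at most $|\Omega|+1$ iterations.

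Next I would argue optimality at termination. Suppose the algorithm stops at some iteration $\psi$ with output $(\bar{x}^{\psi}, R_{\bar{x}^{\psi}})$. On the one hand, $\mathcal{F}^{\psi}$ is obtained from $\mathcal{F}$ by dropping all constraints~(\ref{f01}) with $\bar{y} \in \Omega \setminus \Omega^{\psi}$, hence it is a relaxation of $\mathcal{F}$; consequently $lb^{\psi}$ is a lower bound on the optimal value of $\mathcal{F}$ (this can also be read off from Proposition~\ref{prop_lb} applied with $\Gamma = \Omega^{\psi}$, together with the identification $lb^{\psi} = R^{\Gamma}_{\bar{x}^{\psi}}$). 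On the other hand, $\bar{x}^{\psi} \in \Omega$ is feasible for $\mathcal{F}$, so $R_{\bar{x}^{\psi}}$ is an upper bound on the optimum. The stopping condition gives $lb^{\psi} \geq R_{\bar{x}^{\psi}}$, which combined with $lb^{\psi} \leq \mathrm{opt}(\mathcal{F}) \leq R_{\bar{x}^{\psi}}$ forces equality throughout and certifies that $\bar{x}^{\psi}$ is a robust solution of $\mathcal{F}$ with robustness cost $R_{\bar{x}^{\psi}}$.

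I do not anticipate a serious obstacle: all heavy lifting is already done in Proposition~\ref{prop_conv3} (which guarantees progress whenever the algorithm does not stop) and Proposition~\ref{prop_conv1} (which guarantees that the relaxation becomes exact once $\Omega^{\psi}$ exhausts $\Omega$). The main care needed is just to make explicit the two-sided bound $lb^{\psi} \leq \mathrm{opt}(\mathcal{F}) \leq R_{\bar{x}^{\psi}}$ and to invoke the finiteness of $\Omega$ to bound the iteration count; the rest is routine.
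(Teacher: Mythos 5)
Your proof is correct and follows essentially the same route as the paper, which likewise combines the finiteness of $\Omega \subseteq \{0,1\}^n$ with Proposition~\ref{prop_conv3} (strict growth of $\Omega^{\psi}$ at every non-terminating iteration) and Proposition~\ref{prop_conv1} (exactness of the relaxation once $\Omega^{\psi} = \Omega$). You merely spell out more explicitly than the paper does the two-sided bound $lb^{\psi} \leq \mathrm{opt}(\mathcal{F}) \leq R_{\bar{x}^{\psi}}$ certifying optimality at termination, which is a welcome addition rather than a deviation.
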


\begin{proof}
As $\Omega$ is defined in terms of binary variables, it consists of a finite discrete set of solutions. Thus, the convergence of
Algorithm~\ref{Benders01} is guaranteed by Proposition~\ref{prop_conv1} and Proposition~\ref{prop_conv3}. \qed
\end{proof}

\section{Concluding Remarks}
In this work, we presented the first formal proof of the finite convergence of state-of-the-art logic-based Benders' decomposition algorithms for
a class of robust optimization problems, namely interval 0-1 min-max regret problems.
These algorithms were generically described by means of a logic-based Benders' decomposition framework, which was proved to converge to
an optimal solution in a finite number of iterations.

\bibliographystyle{splncs04}
 \bibliography{main}
 
\end{document}